\theoremstyle{definition}
\newtheorem{theorem}{Theorem}
\newtheorem*{lemma}{Lemma}
\newtheorem*{remark}{Remark}
\title{A Note on the Provision of a Public Service\\ of Different Qualities\thanks{We would like to thank Mauro Bambi, Daniel Cardona, and Leslie Reinhorn for their useful comments.}}
\author{Monica Anna Giovanniello\thanks{Departament d'Economia Aplicada, Universitat de les Illes Balears, 07122 Palma de Mallorca, Spain, ma.giovanniello@uib.cat}\thanks{Acknowledge financial support from the Spanish Ministerio de Ciencia y Educacion and Miniesterio de Universidades (Agencia Estatal de Investigación) through PID2019-107833GB-I00/AEI/10.13039/501100011033}\and Simone Tonin\thanks{Dipartimento di Scienze Economiche e Statistiche, Universit\`a degli Studi di Udine, 33100 Udine, Italy, simone.tonin@uniud.it}}
\date{November 2020}
\begin{document}
\maketitle
	
\begin{abstract}
We study how the quality dimension affects the social optimum in a model of spatial differentiation where two facilities provide a public service. If quality enters linearly in the individuals' utility function, a symmetric configuration, in which both facilities have the same quality and serve groups of individuals of the same size, does not maximize the social welfare. This is a surprising result as all individuals are symmetrically identical having the same quality valuation. We also show that a symmetric configuration of facilities may maximize the social welfare if the individuals' marginal utility of quality is decreasing.\\
\emph{Keywords}: Public goods; spatial differentiation; quality; social welfare. \\
\emph{JEL classification}: D60, H41.
\end{abstract}
	
\section{Introduction}
We study the problem of providing a public service in different locations. One may consider the problem of a government that has to provide education to its population and it has to decide both the locations of the schools and the level of resources (education quality) of each school. 
We build a model where individuals are uniformly distributed along the Hotelling line and the service is provided by a configuration of two facilities. We introduce a new dimension by assuming that each facility is not only characterized by its location but also by its quality. Individuals have homogeneous quality valuation and they bear quadratic transportation cost to consume the service. In this framework, we study the configuration of facilities that maximizes the social welfare function. A trade-off arises between the location and quality of the facilities and the size of the groups of individuals served by them.\par
	
We first consider the case in which quality enters linearly in the individuals' utility function. We show that, when the quality valuation is high enough, the optimal configuration is such that one facility has the highest quality and serves all the individuals. Differently, when the quality valuation is low enough, it is optimal to have a facility with the highest quality serving a large group of individuals and another one with the lowest quality serving a smaller group of individuals. It is surprising that the symmetric configuration, two facilities having the same quality and serving groups of individuals of the same size, is never optimal despite individuals have homogeneous quality valuation. 
	
We continue our analysis by considering an extension in which individuals' marginal utility of quality is decreasing. In this case, we show that the symmetric configuration of facilities is a social optimum when individuals' quality valuation is low enough. The intuition behind our results is the following. First, when individuals' marginal utility of quality is constant and the quality valuation is low enough, there is an individual who is indifferent between consuming the service in a distant high quality facility or in a closer one of low quality, \emph{i.e.} higher quality may compensate for higher transportation costs. Differently, when individuals' marginal utility of quality is decreasing and the quality valuation is low enough, the symmetric configuration of facilities becomes a social optimum as quality does not compensate for transportation costs.\par
	
To the best of our knowledge this is the first paper that analyse the provision of a public service where both location and quality are considered.\footnote{Neven and Thisse (1989) introduced vertically differentiation in an Hotelling model with private goods.} Starting from Tiebout (1956) there is a large literature on local public goods where their provision is constrained by a division of the space into jurisdictions. This literature mainly focuses on jurisdictions' optimal number, size, and composition (see Rubinfeld (1987) and Scotchmer (2002) for further references). 
Cremer et al. (1985) proposed an approach based on spatial competition theory to studies the optimal number and locations of facilities producing the public good.\footnote{See Gabszewicz and Thisse (1992) for a survey on spatial competition theory.} 
Differently from their contribution, we analyse the optimal provision of a public good characterized by both quality and location, but we astray from organizational and financing issues. \par
The rest of the paper is organized as follows. Section 2 describes the mathematical model. Section 3 is devoted to the study of the social optimum when individuals have constant marginal utility of quality. Section 4 considers a case in which individuals have decreasing marginal utility of quality. Section 5 concludes.
	
	\section{Mathematical model}
	We consider a spatial model described by a uniform distribution of individuals over the interval $I=[0,1]$ and by two facilities providing a public service without congestion.\par
	The two facilities are located in $a$ and $b$, with $0\leq a\leq b\leq 1$, and their quality are $q_a$ and $q_b$ respectively. The amount of resources available to finance the public service is exogenously given and it results in the following constraint $q_a+q_b=1$.\footnote{Given the individual's utility in (\ref{utility}), considering the case $q_a+q_b\leq 1$ would not change our analysis.}\par
	Individuals are assumed to have the same quality valuation for the public service but to differ for their locations. Each individual consumes the public service in only one location $x \in \{a, b\}$ and he has to bear the corresponding transportation cost. For simplicity, we also assume that all individuals consume the public service only once. We can then define the utility function of an individual located in $i\in I$ consuming the public service in $x\in\{a, b\}$ as
	\begin{equation}\label{utility}
	u(i,x)=\theta q_x - (i-x)^2,
	\end{equation}
	with $\theta>0$ representing the individuals' quality valuation for the public service.\footnote{We astray from considering the tax to finance the public service because subtracting a lump-sum tax from the utility of all individuals would not change our analysis.}\par 
	The social welfare function is then given by
	\begin{equation}\label{generalwelfare}
	\int_i \theta q_{x(i)}- \left(i-x(i)\right)^2 di,
	\end{equation}
	where $x(i)\in\mbox{arg}\max_{x\in\{a,b\}} \theta q_x-(i-x)^2$ is the optimal facility that maximizes the individual $i$'s utility function.
	
	\section{Social optimum}
	Let $(a,b,q)$, with $0\leq a\leq b\leq 1$ and $0\leq q\leq 1$, be a configuration of facilities that specifies their locations, $a$ and $b$, and their qualities, $q_a=q$ and $q_b=1-q$. A social optimum is a configuration of facilities that maximizes the social welfare function (\ref{generalwelfare}).\par
	To find the social optimum we exploit the fact that in our framework there exists a unique indifferent individual. Thus, we can write the social welfare function as a summation of two integrals: one defined over the interval of individuals consuming the public service in the facility located in $a$ and the other defined over the interval of individuals consuming the public service in the facility located in $b$. This allows us to use the first order necessary conditions for a maximum to find the social optimum.\par
	The function $\hat{j}(a,b,q)$ associates to any configuration of facilities the unique indifferent individual as follows
	\begin{equation*}\label{indifferent}
	\hat{j}(a,b,q)=\left\{ \begin{array}{ll}
	0 &\mbox{ if } (j<0) \mbox{ or } (a=b \mbox{ and } q<\frac{1}{2}),\\ 
	\frac{a^2-b^2-\theta(2 q-1)}{2(a-b)} &\mbox{ if } j\in I,\\
	1&\mbox{ if } (j>1) \mbox{ or } (a=b \mbox{ and } q>\frac{1}{2}),
	\end{array}\right.
	\end{equation*}
	with $j=\frac{a^2-b^2-\theta(2 q-1)}{2(a-b)}$ being the unique solution of the equation $u(i,a)=u(i,b)$ with $a \neq b$. When $j$ does not belong to the interval $I$, $\hat{j}(a,b,q)$ is equal to $0$ or $1$ depending of $j$ being less than 0 or greater than 1. Note also that $j$ is not defined at configurations of facilities where $a=b$ while, in such cases, $\hat{j}(a,b,q)$ is equal to $0$ or $1$ as the public service is consumed by all individuals in the facility with the highest quality. The only configuration of facilities for which $\hat{j}(a,b,q)$ is not defined is $\left(a,b,\frac{1}{2}\right)$ with $a=b$. However, as the following remark points out, this is not a problem for our analysis.
	\begin{remark} 
		Any configuration of facilities $\left(a,b,\frac{1}{2}\right)$ with $a=b$ is not a social optimum because it is always possible to find a configuration of facilities $\left(a',b',\frac{1}{2}\right)$ with $a'<b'$ that gives a higher social welfare. 
	\end{remark}
	Given the function $\hat{j}(a,b,q)$, the social welfare function can be written as
	\begin{equation}\label{welfare}
	W(a,b,q)=\int_{0}^{\hat{j}(a,b,q)}\theta q-(i-a)^2di+\int_{\hat{j}(a,b,q)}^{1}\theta (1-q)-(i-b)^2di,
	\end{equation}

	We now state our main result which identifies the social optima for different values of the quality valuation $\theta$.
	\begin{theorem}
		The social optima configurations of facilities are
		\begin{itemize}
			\item[--] for $\theta\in\left(0,\frac{1}{4}\right)$: $\left(\frac{1}{4}+\theta,\frac{3}{4}+\theta,1\right)$ and $\left(\frac{1}{4}-\theta,\frac{3}{4}-\theta,0\right)$;
			\item[--] for $\theta\in\left[\frac{1}{4},\infty\right)$: $\left(\frac{1}{2},b,1\right)$ with $b\in\left[\frac{1}{2},1\right]$ and $\left(a,\frac{1}{2},0\right)$ with $a\in\left[0,\frac{1}{2}\right]$.
		\end{itemize}
	\end{theorem}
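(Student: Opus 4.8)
\section*{Proof proposal}

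The plan is to exploit the compactness of the feasible region together with an envelope argument that tames the first-order conditions, and then to collapse the three-dimensional problem to a one-dimensional one via the special role of the quality variable and the reflection symmetry of the model. First I would note that the feasible set $\{(a,b,q):0\le a\le b\le1,\ 0\le q\le1\}$ is compact and that $W$ is continuous on it, so a maximizer exists and the task is merely to locate it. The key computational observation is that, because $\hat{j}$ is defined by the indifference condition $u(\hat{j},a)=u(\hat{j},b)$, the two integrands in (\ref{welfare}) coincide at $i=\hat{j}$; hence, when differentiating $W$ by Leibniz's rule, the boundary terms carrying $\partial\hat{j}/\partial a$, $\partial\hat{j}/\partial b$ and $\partial\hat{j}/\partial q$ cancel. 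This envelope effect yields the clean expressions
$$\frac{\partial W}{\partial a}=\int_0^{\hat{j}}2(i-a)\,di,\qquad \frac{\partial W}{\partial b}=\int_{\hat{j}}^{1}2(i-b)\,di,\qquad \frac{\partial W}{\partial q}=\theta(2\hat{j}-1),$$
the first two vanishing precisely when each active facility sits at the midpoint of the interval it serves, that is $a=\hat{j}/2$ and $b=(1+\hat{j})/2$.

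The crucial step is the analysis in $q$. For fixed $a<b$ the interior indifferent individual is affine and increasing in $q$, so $\partial W/\partial q=\theta(2\hat{j}-1)$ is itself increasing in $q$; that is, $W$ is convex in $q$ (and the same slopes persist, preserving convexity, on the ranges where $\hat{j}$ is clamped at $0$ or $1$). Since a convex function on $[0,1]$ is maximized at an endpoint, at any optimum with $a<b$ we must have $q\in\{0,1\}$, and the symmetric configuration, at which the $q$-derivative vanishes, is a minimum rather than a maximum in the $q$-direction --- this is exactly the mechanism behind the ``symmetric configuration is never optimal'' phenomenon. The reflection $(a,b,q)\mapsto(1-b,1-a,1-q)$, under which $W$ is invariant and which swaps the two candidate families, then lets me assume without loss of generality that $q=1$.

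With $q=1$ I would maximize over $(a,b)$. If both facilities are active, $\hat{j}\in(0,1)$, the midpoint conditions force $b-a=\tfrac12$, and substituting into the definition of $\hat{j}$ gives the fixed point $\hat{j}=\tfrac12+2\theta$, hence $(a,b)=(\tfrac14+\theta,\tfrac34+\theta)$; this is feasible exactly when $\theta<\tfrac14$. For $\theta\ge\tfrac14$ the fixed point would require $\hat{j}\ge1$, so the configuration degenerates: every individual uses the high-quality facility, whose transport-minimizing location is $a=\tfrac12$, while $b$ is unused and hence free in $[\tfrac12,1]$. I would close this case by checking that $\theta\ge\tfrac14$ is precisely the condition under which even the least favourably placed individual (at $i=1$, with $b=1$) weakly prefers the quality-$1$ facility, so that $\hat{j}=1$ genuinely holds throughout the stated range; the two regimes then match continuously at $\theta=\tfrac14$, where $(\tfrac14+\theta,\tfrac34+\theta)=(\tfrac12,1)$ lies in both families.

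The main obstacle I anticipate is not the interior calculus but the exhaustive treatment of the boundary of the feasible region. One must separately dispose of the degenerate configurations $a=b$ (invoking the Remark for $q=\tfrac12$ and a direct welfare comparison otherwise), confirm that nothing on the faces $a=0$, $b=1$ or on the edges where $\hat{j}$ is clamped beats the candidates, and --- since the first-order conditions only pin down stationary points --- verify through the second-order conditions or by direct evaluation of $W$ that the identified configurations are global maxima rather than saddles. Organising these cases so that the threshold $\theta=\tfrac14$ emerges cleanly, and establishing optimality of the \emph{entire} segment $b\in[\tfrac12,1]$ in the high-$\theta$ regime, is where the genuine bookkeeping lies.
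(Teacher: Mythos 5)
Your proposal is correct and arrives at the same candidates and the same threshold $\theta=\tfrac14$, but by a genuinely different route. The paper's proof is a brute-force KKT analysis: it writes out the closed form of $W$ on each regime of $\hat{j}$, solves the stationarity system separately for the cases where neither, only $(iv)$, or only $(v)$ binds (plus the clamped cases $\hat{j}=0,1$), and only discovers that $q^*\in\{0,1\}$ and that $\left(\tfrac14,\tfrac34,\tfrac12\right)$ loses at the final step, by explicitly evaluating and comparing the welfare of the five candidate families. You extract the structure first: the envelope cancellation at $i=\hat{j}$ gives the clean derivatives $\partial W/\partial a=\hat{j}(\hat{j}-2a)$, $\partial W/\partial b=(1-b)^2-(\hat{j}-b)^2$ and $\partial W/\partial q=\theta(2\hat{j}-1)$, and since $\hat{j}$ is non-decreasing in $q$ the last of these is non-decreasing in $q$, so $W(a,b,\cdot)$ is convex and maximized at $q\in\{0,1\}$. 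This \emph{explains}, rather than merely computes, why the symmetric configuration is stationary yet never optimal, and together with the reflection $(a,b,q)\mapsto(1-b,1-a,1-q)$ it collapses the problem to a two-variable maximization at $q=1$, where the midpoint conditions and the fixed point $\hat{j}=\tfrac12+2\theta$ produce the threshold. The paper's route buys a fully explicit global comparison (its appendix Lemma does, by a perturbation argument, exactly the boundary bookkeeping on $a=0$, $b=1$, $a=b$ that you defer); your route buys conceptual economy and a structural account of the asymmetry result. The one piece you should not leave implicit is the comparison, for $\theta\in\left(0,\tfrac14\right)$, between the interior candidate $\left(\tfrac14+\theta,\tfrac34+\theta,1\right)$ and the clamped family $\left(\tfrac12,b,1\right)$, since both survive the first-order analysis there; the computation gives a welfare margin of $\left(\theta-\tfrac14\right)^2$ in favour of the interior candidate, which also confirms the clean matching of the two regimes at $\theta=\tfrac14$.
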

	The key insight on which is based this result is that the marginal utility of quality is constant while the marginal transportation cost is increasing. Consequently, for $\theta\in\left[\frac{1}{4},\infty\right)$ at the social optimum there is one facility with quality 1 serving all individuals while the other one has quality $0$ and serves nobody. This configuration is optimal because for all individuals the gains derived from consuming in the facility with the highest quality more than compensate for the higher transportation costs borne to reach it. Note also that the individual who is located in the same point of the facility with quality $0$ prefers to consume from the highest quality facility. For $\theta\in\left(0,\frac{1}{4}\right)$ at the social optima there are again a facility with quality $1$ and a facility with quality $0$. But now individuals are segregated in two different groups. A large group of individuals consume the public service from the highest quality facility while the other smaller group is served from the facility with quality $0$. Heuristically speaking, for a small group of individuals quality does not compensate for the transportation costs and such group is then served by the nearer facility with quality $0$.
	
	\begin{proof}
		The social welfare maximization problem is
		\begin{equation}\label{maxwelfare}
		\begin{aligned}
		&\underset{a,b,q}{\text{max}}& &W(a,b,q),\\
		&\text{subject to}& & -a\leq 0 & (i)\\
		& & & a-b\leq 0 & (ii)\\
		& & & b\leq 1 & (iii)\\
		& & & -q\leq 0 & (iv)\\
		& & & q\leq 1 & (v)
		\end{aligned}
		\end{equation}
		To prove the theorem, we first find the configurations of facilities $(a_h^*,b_h^*,q_h^*)$ that satisfy the first order necessary conditions for a maximum and are candidate to be the social optimum. 
		We then compare the social welfare associated to each configuration of facilities found and we finally identify the social optimum.\par
		We first consider the first order necessary conditions for a maximum associated to triples $(a,b,q)$ such that $\hat{j}(a,b,q)\in(0,1)$. By solving the integrals of the social welfare function (\ref{welfare}), when $\hat{j}(a,b,q)\in(0,1)$, we obtain
		\begin{equation}\label{welfaresolved}
		\theta(1-q)-b^2+b-\frac{1}{3}-\frac{(a^2-b^2-\theta(2q-1))^2}{4\left(a-b\right)}.
		\end{equation}
		
		By the Lemma in the Appendix, when we consider function (\ref{welfaresolved}) to derive the first order necessary conditions for a maximum, we just focus on the cases in which only constraints $(iv)$ or $(v)$ may be binding. We then have three cases to analyze. First, neither constraints $(iv)$ nor $(v)$ are binding. Then, the first order necessary conditions for a maximum with respect to $a$, $b$, and $q$ are respectively
		\begin{align}
		&-\frac{\left(a^2-b^2-\theta (2q-1)\right) \left(3 a^2-4 a b+b^2+\theta(2 q-1) \right)}{4 (a-b)^2}=0,\label{partiala}\\
		&-\frac{((a-2) a-(b-2) b-\theta (2q-1)) \left(a^2+a (2-4 b)+b (3 b-2)-\theta (2q-1)\right)}{4 (a-b)^2}=0,\label{partialb}\\
		& \theta \left(\frac{\theta(1-2q)}{a-b}+a+b-1\right)=0\notag.
		\end{align}
		The unique solution of the system of equations above is the configuration of facilities
		$$(a_1^*,b_1^*,q_1^*)=\left(\frac{1}{4},\frac{3}{4},\frac{1}{2}\right).$$
		Second, only constraint $(iv)$ is binding. Then, the first order necessary conditions for a maximum with respect to $a$ and $b$ are still (\ref{partiala}) and (\ref{partialb}). While the first order necessary condition for a maximum with respect to $q$ becomes
		$$ \theta \left(\frac{\theta (1-2q) }{a-b}+a+b-1\right)+\lambda_{iv}=0, $$
		with $\lambda_{iv}$ being the Lagrangian multiplier associated to the constraint $(iv)$. Then, the configuration of facilities
		$$(a_2^*,b_2^*,q_2^*)=\left(\frac{1}{4}-\theta,\frac{3}{4}-\theta,0\right)$$
		for $\theta \in (0,\frac{1}{4})$ is the unique solution of first order necessary conditions for a maximum. 
		Third, only constraint $(v)$ is binding. By following, \emph{mutatis mutandis}, the same steps above, it is possible to verify that the configuration of facilities
		$$(a_3^*,b_3^*,q_3^*)=\left(\frac{1}{4}+\theta,\frac{3}{4}+\theta,1\right)$$ 
		for $\theta \in (0,\frac{1}{4})$ is the unique solution of first order necessary conditions for a maximum. \par
		
		We next consider the first order necessary conditions for a maximum associated to triples $(a,b,q)$ such that $\hat{j}(a,b,q)=0$. By solving the integrals of the social welfare function (\ref{welfare}), when $\hat{j}(a,b,q)=0$, we obtain
		$$\theta(1-q)-b^2+b-\frac{1}{3}.$$
		It is straightforward to verify that the configurations of facilities
		$$(a_4^*,b_4^*,q_4^*)=\left(a,\frac{1}{2},0\right),$$
		such that $a\in[0,\frac{1}{2}]$, are solutions of the first order necessary conditions for a maximum.\par
		
		We finally consider the first order necessary conditions for a maximum associated to triples $(a,b,q)$ such that $\hat{j}(a,b,q)=1$. By following, \emph{mutatis mutandis}, the same steps above, it is straightforward to verify that the configurations of facilities 
		$$(a_5^*,b_5^*,q_5^*)=\left(\frac{1}{2},b,1\right),$$
		such that $b\in[\frac{1}{2},1]$, are solutions of the first order necessary conditions for a maximum.
		\par
		
		At last, we compare the social welfare associated to each configuration of facilities that solves the first order necessary conditions for a maximum. It is immediate to calculate that
		\begin{align*}
		W(a_1^*,b_1^*,q_1^*)&=\frac{24 \theta-1}{48},\\
		W(a_2^*,b_2^*,q_2^*)=W(a_3^*,b_3^*,q_3^*)&=\frac{48 \theta^2+24 \theta-1}{48},\\
		W(a_4^*,b_4^*,q_4^*)=W(a_5^*,b_5^*,q_5^*)&=\frac{48 \theta-4}{48}.
		\end{align*}
		For $\theta\in(0,\frac{1}{4})$, the configurations of facilities $(a_h^*,b_h^*,q_h^*)$ for $h=1,2,3,4,5$ are candidate to be social optima. 
		It is straightforward to verify that 
		$$W(a_2^*,b_2^*,q_2^*)=W(a_3^*,b_3^*,q_3^*)>W(a_1^*,b_1^*,q_1^*)$$
		and that
		$$W(a_2^*,b_2^*,q_2^*)=W(a_3^*,b_3^*,q_3^*)>W(a_4^*,b_4^*,q_4^*)=W(a_5^*,b_5^*,q_5^*).$$
		We can now conclude that, for $\theta\in(0,\frac{1}{4})$, the social optima are $(a_2^*,b_2^*,q_2^*)$ and $(a_3^*,b_3^*,q_3^*)$. For $\theta\in[\frac{1}{4},\infty)$, the configurations of facilities $(a_h^*,b_h^*,q_h^*)$ for $h=1,4,5$ are candidate to be social optima.  Given that 
		$$W(a_4^*,b_4^*,q_4^*)=W(a_5^*,b_5^*,q_5^*)>W(a_1^*,b_1^*,q_1^*),$$
		the social optimal are $(a_4^*,b_4^*,q_4^*)$ and $(a_5^*,b_5^*,q_5^*)$, for $\theta\in[\frac{1}{4},\infty)$.
	\end{proof}


	\section{The symmetric configuration of facilities}
	
	In this section we study an example in which the symmetric configuration of facilities $\left(\frac{1}{4}, \frac{3}{4}, \frac{1}{2} \right)$ is a social optimum. The reason why it is worth investigating such configuration lies in its characteristics: all individuals consume the public service from a facility of the same quality, $q_a = q_b$, and both facilities serve groups of individuals of the same size. This case can then be considered as the most egalitarian provision of the public service.
	
	We now consider the same framework described in Section 2 with the exception that individuals have decreasing marginal utility of quality. Specifically, we assume the utility function of an individual located in $i \in I$ consuming the public service in $x \in \{a, b\}$ is
	\begin{equation}\label{utilityroot}
	\tilde{u}(i,x)=\theta \sqrt{q_x} - (i-x)^2.
	\end{equation}
	
	The function $\tilde{j}(a,b,q)$ that associates to any configuration of facilities the unique indifferent individual becomes
	\begin{equation*}\label{indifferent2}
	\tilde{j}(a,b,q)=\left\{ \begin{array}{ll}
	0 &\mbox{ if } (j<0) \mbox{ or } (a=b \mbox{ and } q<\frac{1}{2}),\\ 
	\frac{a^2-b^2-\theta(\sqrt{q}-\sqrt{1-q})}{2(a-b)} &\mbox{ if } j\in I,\\
	1&\mbox{ if } (j>1) \mbox{ or } (a=b \mbox{ and } q>\frac{1}{2}).
	\end{array}\right. 
	\end{equation*}
	abusing of notation let $j=\frac{a^2-b^2-\theta(\sqrt{q}-\sqrt{1-q})}{2(a-b)}$ be the unique solution of the equation $\tilde{u}(i,a)=\tilde{u}(i,b)$, with $a\neq b$. As above, the only configuration of facilities for which $\tilde{j}(a,b,q)$ is not defined is $\left(a,b,\frac{1}{2}\right)$ with $a=b$. However, the Remark holds also in this case.
	
	Given the function $\tilde{j}(a,b,q)$, the social welfare function can be written as
	\begin{equation}\label{welfareroot}
	\tilde{W}(a,b,q)=\int_{0}^{\tilde{j}(a,b,q)}\theta \sqrt{q}-(i-a)^2di+\int_{\tilde{j}(a,b,q)}^{1}\theta \sqrt{1-q}-(i-b)^2di.
	\end{equation}
	
	We now state our result on the symmetric configuration of facilities. 
	\begin{theorem}
		If $\theta \in (0,\frac{1}{4\sqrt{2}})$, then the social optimum is the symmetric configuration of facilities $\left(\frac{1}{4}, \frac{3}{4}, \frac{1}{2} \right)$ .
	\end{theorem}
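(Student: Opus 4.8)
The plan is to replicate the architecture of the proof of Theorem 1 for the modified welfare function $\tilde{W}$, partitioning the feasible set according to the location of the indifferent individual $\tilde{j}(a,b,q)$ and, within the interior regime, invoking the Appendix Lemma to restrict attention to cases in which at most the quality constraints $(iv)$ or $(v)$ bind. The crucial difference, and the source of the threshold $\frac{1}{4\sqrt{2}}$, will be the behaviour of the square-root term in the first order condition with respect to $q$.

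First I would treat the interior regime $\tilde{j}(a,b,q)\in(0,1)$ with no quality constraint binding. Differentiating $\tilde{W}$ and using that the two integrands coincide at $i=\tilde{j}$ (the very definition of the indifferent individual), the boundary terms produced by Leibniz's rule cancel, so the stationarity conditions in $a$ and $b$ reduce to $\int_{0}^{\tilde{j}}2(i-a)\,di=0$ and $\int_{\tilde{j}}^{1}2(i-b)\,di=0$, i.e. $a=\tilde{j}/2$ and $b=(1+\tilde{j})/2$ (each facility sits at the midpoint of the interval it serves). The stationarity condition in $q$ becomes $\frac{\theta\tilde{j}}{2\sqrt{q}}=\frac{\theta(1-\tilde{j})}{2\sqrt{1-q}}$, that is $\tilde{j}=\frac{\sqrt{q}}{\sqrt{q}+\sqrt{1-q}}$. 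Substituting $a=\tilde{j}/2$ and $b=(1+\tilde{j})/2$ back into the expression defining $\tilde{j}$ yields the second relation $\tilde{j}=\frac{1}{2}+2\theta\bigl(\sqrt{q}-\sqrt{1-q}\bigr)$.

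Then I would eliminate $\tilde{j}$ between these two relations. Writing $s=\sqrt{q}$ and $t=\sqrt{1-q}$, the system collapses to $\frac{s-t}{2(s+t)}=2\theta(s-t)$, whose solutions are either $s=t$ (that is $q=\frac{1}{2}$, giving the symmetric configuration $(\frac{1}{4},\frac{3}{4},\frac{1}{2})$) or $s+t=\frac{1}{4\theta}$. The key observation is that $\sqrt{q}+\sqrt{1-q}\le\sqrt{2}$ for all $q\in[0,1]$, with equality only at $q=\frac{1}{2}$; hence the second branch is feasible with $s\neq t$ only when $\frac{1}{4\theta}<\sqrt{2}$, i.e. $\theta>\frac{1}{4\sqrt{2}}$. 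Consequently, for $\theta\in(0,\frac{1}{4\sqrt{2}})$ the unique interior critical point is $(\frac{1}{4},\frac{3}{4},\frac{1}{2})$. I expect this elimination step, together with the recognition that the bound $\sqrt{q}+\sqrt{1-q}\le\sqrt{2}$ is exactly what rules out asymmetric interior optima below the threshold, to be the heart of the argument.

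It then remains to dispose of the boundary candidates and compare welfare. In the interior regime with a quality constraint binding, the marginal contribution $\frac{\theta\tilde{j}}{2\sqrt{q}}$ diverges to $+\infty$ as $q\to0^{+}$, and symmetrically to $-\infty$ as $q\to1^{-}$, so the KKT stationarity condition for $q$ cannot hold with the sign required of its multiplier; thus, unlike in Theorem 1, there are no interior-$\tilde{j}$ candidates with $q\in\{0,1\}$. In the corner regimes $\tilde{j}=0$ and $\tilde{j}=1$ the welfare reduces to $\theta\sqrt{1-q}-b^{2}+b-\frac{1}{3}$ (respectively its mirror image), which is decreasing in $q$ and optimized at $q=0$, $b=\frac{1}{2}$ (respectively $q=1$, $a=\frac{1}{2}$), with value $\theta-\frac{1}{12}$. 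Finally I would compute $\tilde{W}(\frac{1}{4},\frac{3}{4},\frac{1}{2})=\frac{\theta}{\sqrt{2}}-\frac{1}{48}$ and verify that $\frac{\theta}{\sqrt{2}}-\frac{1}{48}>\theta-\frac{1}{12}$ holds throughout $(0,\frac{1}{4\sqrt{2}})$; indeed this inequality is equivalent to $\theta<\frac{1}{16(1-1/\sqrt{2})}$, and $\frac{1}{4\sqrt{2}}<\frac{1}{16(1-1/\sqrt{2})}$, so the symmetric configuration strictly dominates all remaining candidates and is the social optimum, exactly as the comparison of critical values concludes the proof of Theorem 1.
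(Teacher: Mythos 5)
Your proof is correct and follows the same overall architecture as the paper's (case split on whether $\tilde{j}$ is interior or at a corner, first order conditions, then comparison of welfare levels of the surviving candidates; your corner values $\theta-\frac{1}{12}$ and $\frac{\theta}{\sqrt{2}}-\frac{1}{48}$ agree with the paper's $\frac{48\theta-4}{48}$ and $\frac{24\sqrt{2}\theta-1}{48}$). The genuine difference is in the key uniqueness step: the paper asserts that for $\theta\in(0,\frac{1}{4\sqrt{2}})$ the interior first order conditions have the symmetric configuration as unique solution, and relegates the verification to a computer algebra system; you instead reparametrize the stationarity conditions as ``each facility at the midpoint of its catchment area'' ($a=\tilde{j}/2$, $b=(1+\tilde{j})/2$), derive $\tilde{j}=\frac{\sqrt{q}}{\sqrt{q}+\sqrt{1-q}}$ from the $q$-condition, and eliminate $\tilde{j}$ to get the factorization $\frac{s-t}{2(s+t)}=2\theta(s-t)$ with $s=\sqrt{q}$, $t=\sqrt{1-q}$. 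This shows that any asymmetric interior critical point must satisfy $\sqrt{q}+\sqrt{1-q}=\frac{1}{4\theta}$, which is infeasible below the threshold since $\sqrt{q}+\sqrt{1-q}\leq\sqrt{2}$ --- an analytic and transparent explanation of where $\frac{1}{4\sqrt{2}}$ comes from, which the paper's computer-assisted step leaves opaque. You also supply an explicit reason (the divergence of $\frac{\theta\tilde{j}}{2\sqrt{q}}$ as $q\to0^{+}$, incompatible with the sign of the KKT multiplier) for the paper's bare assertion that no candidates arise when constraint $(iv)$ or $(v)$ binds in the interior-$\tilde{j}$ regime. Both approaches reach the same candidates and the same final comparison; yours buys a self-contained, checkable derivation of the threshold, at the cost of having to argue (as you do via cancellation of the Leibniz boundary terms) that the stationarity conditions really do reduce to the midpoint and marginal-utility-equalization equations.
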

	The key insight on which is based this result is that the marginal utility of quality is decreasing. In other words, for $\theta \in (0,\frac{1}{4\sqrt{2}})$, a high quality facility does not longer compensate for the higher transportation cost borne to reach it. Therefore, the symmetric configuration of facilities emerges as a social optimum.

	\begin{proof} The proof of Theorem 2 follows closely the one above. The social welfare maximization problem is still represented by (\ref{maxwelfare}) where, instead of having $W(a,b,q)$, we have $\tilde{W}(a,b,q)$ defined in (\ref{welfareroot}). We begin by considering the first order necessary conditions for a maximum associated to triples $(a,b,q)$ such that $\tilde{j}(a,b,q)\in(0,1)$.
		By solving the integrals of the social welfare function (\ref{welfareroot}), when $\tilde{j}(a,b,q)\in(0,1)$, we obtain 
		\begin{equation}\label{welfaresolvedroot}
		\theta\sqrt{1-q}-b^2+b-\frac{1}{3}-\frac{\left(a^2-b^2-\theta\left(\sqrt{q}-\sqrt{1-q}\right)\right)^2}{4 (a-b)}.
		\end{equation}
		The Lemma holds also when individuals have the utility function in (\ref{utilityroot}) and then at a social optimum $(a^*,b^*,q^*)$ with $\tilde{j}(a^*,b^*,q^*)\in (0,1)$ constraints $(i)$, $(ii)$, and $(iii)$ are not binding.
		Therefore, we focus on the cases in which only constraints $(iv)$ or $(v)$ may be binding. First, neither constraints $(iv)$ and $(v)$ are binding. Then, the first order necessary conditions for a maximum with respect to $a$, $b$, and $q$ are respectively
		
		\begin{align*}
		&-\frac{\left(a^2-b^2-\theta\left(\sqrt{q}-\sqrt{1-q}\right)\right) \left(3 a^2-4 a b+b^2+\theta\left(\sqrt{q}-\sqrt{1-q}\right)\right)}{4 (a-b)^2}=0,\\
		&-\frac{\left((a-2) a-(b-2) b-\theta\left(\sqrt{q}-\sqrt{1-q}\right)\right) \left(a^2+a (2-4 b)+b (3 b-2)-\theta\left(\sqrt{q}-\sqrt{1-q}\right)\right)}{4 (a-b)^2}=0,\\
		&\frac{1}{4}\theta\left(\frac{\left(\frac{1}{\sqrt{q}}+\frac{1}{\sqrt{1-q}}\right) \left(a^2-b^2-\theta\left(\sqrt{q}-\sqrt{1-q}\right)\right)}{a-b}-\frac{2}{\sqrt{1-q}}\right)=0\notag.
		\end{align*}
		For $\theta \in (0,\frac{1}{4 \sqrt{2}})$ there is a unique solution of the system of equation above is the configuration of facilities\footnote{We have solved these first order necessary conditions for a maximum by using a computer algebra system.}
		$$(a_1^*,b_1^*,q_1^*)=\left(\frac{1}{4},\frac{3}{4},\frac{1}{2}\right).$$
		In the case in which either constrain $(iv)$ or constrain $(v)$ is binding, there are no solutions of the first order necessary conditions for a maximum. \par
		
		We now consider the first order necessary conditions for a maximum associated to triples $(a,b,q)$ such that $\tilde{j}(a,b,q)=0$. By solving the integrals of the social welfare function (\ref{welfareroot}) we obtain
		$$\tilde{W}(a,b,q)=\theta\sqrt{1-q}-b^2+b-\frac{1}{3}.$$
		It is possible to verify that the configurations of facilities
		$$(a_2^*,a_2^*,q_2^*)=\left(a, \frac{1}{2}, 0 \right),$$
		such that $a\in[0,\frac{1}{2}]$, are solutions of the first order necessary conditions for a maximum.\par
		
		We finally consider the first order necessary conditions for a maximum associated to triples $(a,b,q)$ such that $\tilde{j}(a,b,q)=1$. By following, \emph{mutatis mutandis}, the same steps above, it is possible to show that the configurations of facilities
		$$(a_3^*,b_3^*,q_3^*)=\left(\frac{1}{2},b,1\right),$$
		such that $b\in[\frac{1}{2},1]$, are solutions of the first order necessary conditions for a maximum.\par
		
		At last, we calculate the social welfare level associated to each configuration of facilities that solves the first order necessary conditions for a maximum
		\begin{align*}
		\tilde{W}(a_1^*,b_1^*,q_1^*)&=\frac{24\sqrt{2}\theta-1}{48}\\
		\tilde{W}(a_2^*,b_2^*,q_2^*)=\tilde{W}(a_3^*,b_3^*,q_3^*)&=\frac{48\theta-4}{48}.\end{align*}
		Comparing the above levels of social welfare for each facility configuration, it is immediate to see that 
		$$\tilde{W}(a_1^*,b_1^*,q_1^*)>\tilde{W}(a_2^*,b_2^*,q_2^*)=\tilde{W}(a_3^*,b_3^*,q_3^*),$$
		for $\theta\in(0,\frac{1}{4 \sqrt{2}})$.
		But, then, the social optimum is $(a_1^*,b_1^*,q_1^*)$.
	\end{proof}
	
	\section{Conclusion}
	
	Our framework allows us to study the provision of a public service horizontally (location) and vertically (quality) differentiated. 
	Theorem 1 shows that the symmetric configuration of facilities is never a social optimum even if individuals have an identical and low enough quality valuation. This result differs from models where the public service is only horizontally differentiated in which, at the optimum, each facility serves a group of individuals of the same size for any number of facilities (see Cremer et al., 1985). Theorem 2 highlights the role of decreasing marginal utility of quality by showing that the symmetric configuration of facilities may arise as a social optimum when it is not possible to compensate higher transportation costs by providing a public service of higher quality. 
	Our results point to many possible extensions that are worthy of further study. It would be very interesting to endogenize the optimal number of facilities and to formalize the taxation scheme required to finance the provision of the public service.
	
	\appendix
	\section*{Appendix}
	To prove our main results we need the following lemma showing which constraints are not binding at the social optima with $\hat{j}(a^*,b^*,q^*) \in (0,1)$. 
	\begin{lemma}\label{interior}
		Let $(a^*,b^*,q^*)$ be a social optimum. If $\hat{j}(a^*,b^*,q^*) \in (0,1)$, then the constraints $(i)$, $(ii)$, and $(iii)$ are not binding.
	\end{lemma}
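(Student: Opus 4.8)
The plan is to rule out each of the three constraints in turn by a local perturbation argument, exploiting the fact that the hypothesis $\hat{j}(a^*,b^*,q^*)\in(0,1)$ already places the configuration in the interior regime where welfare takes the form (\ref{welfaresolved}).

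First I would dispose of constraint $(ii)$ directly. By the definition of $\hat{j}(a,b,q)$, whenever $a=b$ the indifferent individual equals $0$, $1$, or is undefined; none of these lies in $(0,1)$. Hence $\hat{j}(a^*,b^*,q^*)\in(0,1)$ forces $a^*\neq b^*$, and since feasibility gives $a^*\leq b^*$, we obtain $a^*<b^*$, so $(ii)$ is not binding.

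The core of the argument is the computation of $\partial W/\partial a$ and $\partial W/\partial b$ on this interior regime. Differentiating (\ref{welfare}) by the Leibniz rule produces boundary terms at $i=\hat{j}$; the key observation is that these cancel, because $\hat{j}$ is the indifferent individual and so the two integrands agree there, i.e. $\theta q-(\hat{j}-a)^2=\theta(1-q)-(\hat{j}-b)^2$. What survives are the clean expressions
\begin{align*}
\frac{\partial W}{\partial a}&=\int_0^{\hat{j}}2(i-a)\,di=\hat{j}(\hat{j}-2a),\\
\frac{\partial W}{\partial b}&=\int_{\hat{j}}^{1}2(i-b)\,di=(1-b)^2-(\hat{j}-b)^2,
\end{align*}
which one verifies coincide with (\ref{partiala}) and (\ref{partialb}). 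Intuitively, each facility is placed to minimize transport cost over the interval it serves, so these vanish exactly at the midpoints $a=\hat{j}/2$ and $b=(1+\hat{j})/2$. With these in hand, constraints $(i)$ and $(iii)$ follow by contradiction. If $a^*=0$, then $\partial W/\partial a=\hat{j}^2>0$; since $a^*=0<b^*$ and $\hat{j}^*\in(0,1)$ is open, a sufficiently small increase of $a$ is feasible, keeps $\hat{j}\in(0,1)$ by continuity, and strictly raises $W$, contradicting optimality; hence $a^*>0$. Symmetrically, if $b^*=1$, then $\partial W/\partial b=-(1-\hat{j})^2<0$, so a small decrease of $b$ is a feasible improving direction, giving $b^*<1$.

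The main obstacle I anticipate is not the derivative computation but the bookkeeping needed to guarantee the perturbation stays admissible: one must check that the binding constraint blocks only the non-improving direction, that the improving move keeps $a\neq b$ and $\hat{j}\in(0,1)$ so that the formula (\ref{welfaresolved}) and hence the derivative remains valid, and that no other constraint is simultaneously violated. All of these are handled by the strict inequality $a^*<b^*$ together with the openness of $(0,1)$, but they must be stated explicitly for the argument to be airtight.
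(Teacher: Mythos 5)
Your proof is correct, and it reaches the same conclusion by a local-improvement argument at the boundary, but the way you certify the improvement differs from the paper's. The paper fixes $a^*=0$, moves the facility to some $a'>0$, and compares integrals directly: it splits the population into those who stay with the first facility and the switchers in $[\hat{j}(a^*,b^*,q^*),\hat{j}(a',b^*,q^*)]$, and asserts that the first group's total utility rises while the switchers are weakly better off. (That assertion is not term-by-term true --- individuals in $[0,a'/2]$ are strictly worse off after the move --- and the aggregate inequality actually requires $a'$ to be taken small, e.g.\ $a'<\hat{j}(a^*,b^*,q^*)$, a point the paper leaves implicit.) You instead differentiate $W$ via Leibniz's rule, observe that the boundary terms at $\hat{j}$ cancel precisely because $\hat{j}$ is the indifferent individual, and read off the sign of $\partial W/\partial a=\hat{j}(\hat{j}-2a)$ and $\partial W/\partial b=(1-b)^2-(\hat{j}-b)^2$ at the boundary. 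This is the infinitesimal version of the paper's perturbation; it buys you a cleaner quantitative statement (strict positivity of the directional derivative, with no need to choose $a'$), it sidesteps the delicate integral comparison, and it doubles as a verification of the first-order conditions (\ref{partiala}) and (\ref{partialb}) used in the proof of Theorem 1. The paper's version is more elementary in that it avoids the differentiability bookkeeping you rightly flag (keeping $a<b$ and $\hat{j}\in(0,1)$ along the perturbation), but both handle constraint $(ii)$ identically, via $\hat{j}(a^*,b^*,q^*)\in(0,1)\Rightarrow a^*<b^*$.
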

	\begin{proof}
		Let $(a^*,b^*,q^*)$ be the social optimum and assume that $\hat{j}(a^*,b^*,q^*)\in(0,1)$. Suppose that $(i)$ is binding. Consider a location $a'$ such that $a'>a^*=0$. First, note that $\hat{j}(a',b^*,q^*)>\hat{j}(a^*,b^*,q^*)$ as $\frac{\partial\hat{j}(a,b,q)}{\partial a}>0$. Then, it is straightforward to verify that 
		$$\int_{0}^{j(a',b^*,q^*)} u(i,a') di > \int_{0}^{j(a^*,b^*,q^*)} u(i,a^*) di + \int_{j(a^*,b^*,q^*)}^{j(a',b^*,q^*)} u(i,b^*) di.$$
		
		As the individuals $i \in [j(a',b^*,q^*), 1]$, who consume the public service in $b^*$, obtain the same utility under both configurations of facilities, it follows that $W(a',b^*,q^*) > W(a^*,b^*,q^*) $, a contradiction. Hence, constraints $(i)$ is not binding. By following, \emph{mutatis mutandis}, the same steps, it is easy to show that constraint $(iii)$ is not binding. Finally, constraint $(ii)$ is not binding as $\hat{j}(a^*,b^*,q^*)\in(0,1)$.
	\end{proof}
	
	{\small
		}

\begin{thebibliography}{99}
			\bibitem{1} Cremer, H., De Kerchove, A.M. and Thisse, J.F., 1985. An economic theory of public facilities in space. \emph{Mathematical Social Sciences}, 9(3), pp.249-262. 
			\bibitem{2} Gabszewicz, J.J., and Thisse, J.F., 1992. Location. \emph{Handbook of game theory with economic applications}, 1, 281-304.
			\bibitem{3} Neven, D., and Thisse, J.F., 1989. On quality and variety competition.  \emph{CORE Discussion Paper 8920, Universit\'e Catholique de Louvain, Center for Operations Research and Econometrics}.
			\bibitem{4} Rubinfeld, D.L., 1987. The economics of the local public sector. \emph{In Handbook of public economics} (Vol. 2, pp. 571-645). Elsevier.
			\bibitem{5} Scotchmer, S., 2002. Local public goods and clubs. \emph{In Handbook of public economics} (Vol. 4, pp. 1997-2042). Elsevier.
			\bibitem{6}Tiebout, C.M., 1956. A pure theory of local expenditures. \emph{Journal of Political Economy} 94, 416-424. 
	\end{thebibliography}
\end{document}